\def\BState{\State\hskip-\ALG@thistlm}
\newcommand{\pr}{\mathbb{P}}
\newcommand{\R}{\mathbb{R}}
\newcommand{\argmax}[1]{\underset{#1}{\arg\max}}
\newcommand{\argmin}[1]{\underset{#1}{\arg\min}}
\newcommand{\E}{\mathbb{E}}
\newcommand{\I}{\mathcal{I}}
\newcommand{\ind}{\mathds{1}}
\newcommand{\mle}{\hat{\theta}^{G}_{\text{MLE}}}
\theoremstyle{plain}
\newtheorem{nthm}{Theorem}[section]
\newtheorem{nprop}[nthm]{Proposition}
\newtheorem{nlem}[nthm]{Lemma}
\theoremstyle{definition}
\newtheorem{ndefn}[nthm]{Definition}
\theoremstyle{remark}
\newtheorem{nrmk}{Remark}
\newtheorem*{rmk}{Remark}
\crefname{nlem}{Lemma}{Lemmas}
\crefname{nprop}{Proposition}{Propositions}
\crefname{ncor}{Corollary}{Corollaries}
\crefname{nthm}{Theorem}{Theorems}
\crefname{nexa}{Example}{Examples}
\crefname{ndefn}{Definition}{Definitions}
\crefname{nassum}{Assumption}{Assumptions}
\begin{document}

%
\runningtitle{Targeted Causal Structure Discovery}

%
\runningauthor{Agrawal, Squires, Yang, Shanmugam, Uhler}

\twocolumn[

\aistatstitle{ABCD-Strategy: Budgeted Experimental Design \\for Targeted Causal Structure Discovery}

\aistatsauthor{ Raj Agrawal  \And Chandler Squires  \And  Karren Yang  \And Karthikeyan Shanmugam  \And \quad  Caroline Uhler }


\aistatsaddress{ MIT \And  MIT \And MIT \And MIT-IBM Watson AI Lab \\ IBM Research NY \And MIT}


]

\begin{abstract}
Determining the causal structure of a set of variables is critical for both scientific inquiry and decision-making. However, this is often challenging in practice due to limited \emph{interventional} data. Given that randomized experiments are usually expensive to perform, we propose a general framework and theory based on optimal Bayesian experimental design to select experiments for \emph{targeted causal discovery}. That is, we assume the experimenter is interested in learning some function of the unknown graph (e.g., all descendants of a target node) subject to design constraints such as limits on the number of samples and rounds of experimentation. While it is in general computationally intractable to select an optimal experimental design strategy, we provide a tractable implementation with provable guarantees on both approximation and optimization quality based on submodularity. We evaluate the efficacy of our proposed method on both synthetic and real datasets, thereby demonstrating that our method realizes considerable performance gains over baseline strategies such as random sampling. 
\end{abstract}

\section{Introduction} \label{sec:intro}
Determining the causal structure of a set of variables is a fundamental task in causal inference, with  widespread applications not only in artificial intelligence but also in scientific domains such as  biology and economics \citep{friedman00,pearl03,robins00,causality_book}. 
One of the most common ways of representing causal structure is through a \emph{directed acyclic graph} (DAG), where a directed edge between two variables in the DAG represents a direct causal effect and a directed path indicates an indirect causal effect \citep{causality_book}. 



Causal structure learning is intrinsically hard, since a DAG is generally only identifiable up to its \emph{Markov equivalence class} (MEC) \citep{verma90,andersson97}. Identifiability can be improved by performing \emph{interventions} \citep{IMEC2012,yang2018characterizing}, and several algorithms have been proposed for structure learning from a combination of observational and interventional data \citep{IGSP,IMEC2012,yang2018characterizing}. Since experiments tend to be costly in practice, a natural question is how principled \emph{experimental design} (i.e., selection of intervention targets) can be leveraged to maximize the performance of these algorithms under budget constraints.

Seminal works by \citet{koller_active} and \citet{murphy_active_learning} showed that experimental design can improve structure recovery in causal DAG models. 
However, these methods assume a basic framework in which experiments are performed one sample at a time. 
In practice, experimenters often perform a batch of interventions and collect samples over multiple rounds of experiments; and they must also factor in budget and feasibility constraints, such as on the number of unique interventions that can be performed in a single experiment, the number of experimental rounds, and the total number of samples to be collected. 
In genomics, for instance, genome editing technologies have enabled the collection of batches of large-scale interventional gene expression data~\citep{perturb_seq}. 
An imminent problem is understanding how to optimally select a batch of interventions and allocate samples across these interventions, over multiple experimental rounds in a computationally tractable manner. 

Since the initial works by \citet{koller_active} and \citet{murphy_active_learning}, there have been a number of new experimental design methods under budget constraints \citep{hauser_MEC,bed_MEC,bayes_sachs}. These methods suffer from two drawbacks: (1) poor computational scaling \citep[{cf.}][]{bayes_sachs} or (2) strong assumptions including the availability of infinite observational and/or interventional data from each experiment \citep[{cf.}][]{hauser_MEC,bed_MEC}.   
Since it is difficult to learn the correct MEC in a limited sample setting, it is desirable to use interventional samples not only to improve identifiability but also to help distinguish between observational MECs. 

Generalizing the frameworks in~\citep{koller_active,murphy_active_learning,berger,hauser_MEC,bayes_sachs}, we assume the experimenter is interested in learning some function $f(G)$ of the unknown graph $G$. Returning to gene regulation, one might set $f(G)$ to indicate whether some gene $X$ is downstream of some gene $Y$, i.e. if $X$ is a \emph{descendant} of $Y$ in $G$. Using targeted experimental design, all statistical power is placed in learning the target function rather than being agnostic to recovering all features in the graph. In addition, we also explicitly take into account that only finitely many samples are allowed in each round, and work under various budget constraints such as a limit on the number of rounds of experimentation.

We start by reviewing causal DAGs in \cref{sec:prelim} and then propose an entropy-based score function that generalizes the one by \citet{koller_active} and \citet{murphy_active_learning} in Section~\ref{sec_3}. Since optimizing this score function is in general computationally intractable, we propose our \emph{ABCD-Strategy} consisting of approximations via \emph{weighted importance sampling} and greedy optimization in Section~\ref{sec:tract_algo}. We also provide guarantees for this algorithm based on \emph{submodularity}. 
Further, in contrast to earlier score functions, we show that our proposed score function is provably \emph{consistent}. Finally, in \cref{sec:experiments} we demonstrate the empirical gains of the proposed method over random sampling on both synthetic and real datasets.  

\section{Preliminaries}
\label{sec:prelim}

\textbf{Causal DAGs:} \quad Let $G = ([p], A)$ be a \emph{directed acyclic graph} (DAG) with vertices $[p] := \{1,\ldots,p\}$ and directed edges $A$, where $(i,j)\in A$ represents the arrow $i\to j$. A \emph{linear causal model} is specified by a DAG $G$ and a corresponding set of edge weights $\theta\in\mathbb{R}^{|A|}$. Each node $i$ in $G$ is associated with a random variable $X_i$. Under the \textit{Markov Assumption}, each variable $X_i$ is conditionally independent of its nondescendants given its parents, which implies that the joint distribution factors as
$\prod_{i=1}^p \mathbb{P}\big(X_i \ | \ \mathsf{Pa}_G(X_i)\big),$ where $\mathsf{Pa}_G(X_i)$ denotes the parents of node $X_i$ \citep[Chapter 4]{causality_book}. This factorization implies a set of \textit{conditional independence} (CI) relations; 
 the \textit{Markov equivalence class} (MEC) of a DAG $G$ consists of all DAGs that share the same CI relations \citep[Chapter 3]{lauritzen_book}. The 
 \emph{essential graph} $\text{Ess($G$)}$ is a partially oriented graph that uniquely represents the MEC of a DAG by placing directed arrows on edges consistent across the equivalence class and leaves the other edges undirected \citep{andersson97}. 

\textbf{Learning with Interventions:} \quad Let  \emph{intervention} $I \subseteq [p]$ be a set of intervention targets. Intervening on $I$  removes the incoming edges to the random variables $X_{I} \coloneqq (X_i)_{i \in I}$ in $G$ and sets the joint distribution of $X_I$ to a new interventional distribution $\pr^I$. The resulting  \emph{mutilated graph} is denoted by $G^I$. 
A typical choice of $\pr^I$ is the product distribution $\prod_{i \in I} f_i (X_i)$, where each $f_i(X_i)$ is the probability density function for the intervention at $X_i$. We denote by $\I^* \coloneqq \{I_1, \cdots, I_K \}$ the set of all $K \in \mathbb{N}$ allowed interventions and by $\I \subseteq \I^*$ the subset of selected interventions. An intervention $I = \emptyset$ indicates observational data. We assume that $\I^*$ is a \emph{conservative family} of interventions, i.e., for any $i \in [p]$, there exists some $I_j \in \I^*$ such that $i \notin I_j$ \citep{IMEC2012}. Given a conservative family of targets $\I$, two DAGs $G_1$ and $G_2$ are \emph{$\I$-Markov equivalent} if they are observationally Markov equivalent and for all $I \in \I$, $G_1^I$ and $G_2^I$ have the same skeleta \citep{IMEC2012,IMEC2015}. The set of \emph{$\I$-Markov equivalent} DAGs can be represented by the \emph{$\I$-essential graph} $\text{Ess}^{\I}(G)$, a partially directed graph with at least as many directed arrows as $\text{Ess($G$)}$ \citep[Theorem 10]{IMEC2012}.

\textbf{Bayesian Inference over DAGs:} \quad In various applications, the goal is to recover a function $f(G)$ of the underlying causal DAG $G$ given a mix of $n$ independent observational and interventional samples $D = \{(X_{mi}, I^{(m)}): I^{(m)} \in \I^*, m\in [n], i\in [p]\}$. For example, we might ask whether an undirected edge $(i,j)$ is in $A$, or we might wish to discover which nodes are the parents of a node $i$. We can encode our prior structural knowledge about the underlying DAG through a \emph{prior} $\mathbb{P}(G)$.
The \emph{likelihood} $\pr(D \mid G)$ is obtained by marginalizing out $\theta$:
\begin{align*}
	\pr(D \mid G) &= \int_{\theta} \pr( D, \theta \mid G) \; d\theta \\
    	&= \int_{\theta} \pr(D \mid \theta, G) \pr(\theta \mid G) \; d\theta
\end{align*}
and can be computed in closed-form for certain distributions \citep{bge_orig,bge_score}. Applying Bayes' Theorem yields the \emph{posterior distribution} $\pr(G \mid D) \propto \pr(D \mid G) \pr(G)$, which describes the state of knowledge about $G$ after observing the data $D$. Given the posterior, we can then compute $\E_{\pr(G \mid D)} f(G)$, the posterior mean of some target function $f(G)$. Note that when $f$ is an indicator function, this quantity is a posterior probability.

\section{Optimal Bayesian Experimental Design}
\label{sec_3}
Our goal is to learn some feature $f(G)$ of the unknown graph through experimental design under budget constraints such as limited number of experimental rounds. In principle, this question can be answered using \emph{optimal Bayesian experimental design}, namely 
by selecting the experiment that maximizes the expected value of some \emph{utility function} $U$, where the expectation is with respect to hypothetical data generated according to our current beliefs \citep{bayes_exp_design}. Here, the expected utility function $U$ is a function defined on multisets of $\I^*$:     
\begin{ndefn} \label{def:exp_utility}
The \emph{expected utility} $U^f(\xi; D)$ of a multiset of interventions $\xi \in \mathbb{Z}^{\I^*}$ for learning a function $f(G)$ given currently collected data $D$ is given by
\begin{equation}
\begin{split}
U^f(\xi; D) &= \E_{y \sim \pr(y \mid D, \xi)} \ U^f(y, \xi; D) \\
		&= \E_{G, \theta \mid D} \E_{y \mid G, \theta, \xi} \ U^f(y, \xi; D), \quad y \in \R^{|\xi|}, 
\end{split}
\end{equation}
where $U^f(y, \xi; D) \in \R$ is a function measuring the utility of observing additional samples $y$ from a proposed design $\xi$ and $|\xi| \coloneqq \sum_{I \in \I^*} |\text{\# times $I$ in $\xi$}|$. The \emph{optimal Bayesian design} $\xi^*$ under a set of design constraints $C$ is given by
\begin{equation}
\label{opt_problem}
\xi^* \in \argmax{\xi \in  \mathbb{Z}^{\I^*} \cap  C} \ U^f(\xi; D).
\end{equation}
We denote samples collected from such an optimal strategy $\xi^*$ by $D_{\xi^*}$ 
\end{ndefn}
In \cref{def:exp_utility}, $y$ is distributed according to our current beliefs $\pr(y \mid D, \xi)$ = $\E_{G, \theta \mid D} \left[ \pr(y \mid G, \theta, \xi) \right]$, a \emph{mixture distribution} over $(G, \theta)$, and the utility function $U^f(y, \xi; D)$ is averaged over this distribution. There are many potential choices for $U^f(y, \xi; D)$, a popular one being \emph{mutual information}. \citet{koller_active}, \citet{berger} and \citet{murphy_active_learning} propose optimizing mutual information for the problem of recovering the full graph. More precisely, they consider the problem where $f(G) = G$ in the active learning setting, where the experimenter can adaptively collect one sample at a time. 
We here extend their framework to general functions $f(G)$ and the 
\emph{batched setting}, where multiple samples are collected at once and the total number of batches is fixed by the experimenter. Hence, $U^f$ must be defined on multisets instead of elements of $\I^*$ since multiple samples (i.e., interventions of the same type) may be collected in each batch. Note that the difficulty in solving~\cref{opt_problem} stems from the constraint set $C$, which renders this optimization problem combinatorial.  


Recently, \citet{bayes_sachs} proposed a Bayesian experimental design method to work in the batched setting. The authors proposed a utility function based on the expected number of additional edges that could be oriented by performing a particular intervention given the observational MEC. This function is similar to the one proposed by \citet{hauser_MEC} and \citet{bed_MEC}, in which interventions are chosen that fully identify the causal network given the MEC. Unfortunately, the algorithm in \citet{bayes_sachs} has \emph{factorial} dependence on the size of the batch; in addition, we prove in \cref{A:consis_counter} that their proposed utility function is, in general, not \emph{consistent}; see \cref{def:batch_consis} for a definition of  consistency.  

We therefore follow the approach taken by \citet{koller_active} and \citet{murphy_active_learning} and consider the utility function $U^f(y, \xi; D)$ to be given by mutual information. Maximizing the mutual information is equivalent to picking the set of interventions that leads to the greatest expected decrease in entropy of $f(G)$. The mutual information utility function is given by 
\begin{equation} \label{eq:mutual_info}
U_{\text{M.I.}}^f(y, \xi; D) \coloneqq H(f \mid D) - H(f \mid D, y=y, \xi),
\end{equation}
where the \emph{entropy} $H(f \mid D)$ equals
\begin{equation*}
\begin{split}
\sum_{e: f(G) = e}-\pr(f(G) = e \mid D) \log \pr(f(G) = e \mid D), \\ \ \text{and} \ \
\pr(f(G) = e \mid D) = \E_{\pr(G \mid D)} \ind(f(G) = e), \\
\pr(G \mid D) \propto \int_{\theta} \pr(D \mid G, \theta) \pr(\theta \mid G) \pr(G).
\end{split}
\end{equation*}
To better understand the behavior of $U_{\text{M.I.}}^f$, we prove the following proposition, which  highlights the behavior of $U_{\text{M.I.}}^f$ in the limit of infinite samples per intervention; this is the setting studied by \citet{hauser_MEC} and \citet{bed_MEC}. 
\begin{nprop} \label{prop:infinite_samps}
Suppose that the Markov equivalence class $\mathcal{G}$ of $G^*$ is known and the goal is to identify the underlying true DAG $G^*$. 
Furthermore, assume a uniform prior over $\mathcal{G}$, infinite samples per intervention $I \in \I$, and at most $K$ unique interventions per batch as in \citet{bed_MEC}. Then, $U_{\text{M.I.}}$ selects the interventions 
\begin{equation*}
\I_{\text{M.I.}} \in \argmin{|\I| \leq K} \ \frac{1}{|\mathcal{G}|} \sum_{G \in \mathcal{G}}  \log_2 |\text{Ess}^{\I}(G)|
\end{equation*}
%
%
%
where $|\text{Ess}^{\I}(G)| \coloneqq |\{G^{'} \in \mathcal{G}: G^{'} \in \text{Ess}^{\I}(G) \}|$. 
\end{nprop}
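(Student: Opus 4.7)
The plan is to unpack the mutual information $U_{\text{M.I.}}^G$ term by term and argue that in the infinite-sample regime each entropy reduces to the logarithm of the size of an appropriate equivalence class, so that maximizing the mutual information is equivalent to minimizing the expected log-size of $\text{Ess}^{\I}(G)$. Since the Markov equivalence class $\mathcal{G}$ of $G^*$ is known and the prior is uniform, the posterior after observational data is $\pr(G \mid D) = 1/|\mathcal{G}|$ for $G \in \mathcal{G}$ and zero otherwise, so the prior entropy is
\begin{equation*}
H(G \mid D) = \log_2 |\mathcal{G}|,
\end{equation*}
which is a constant independent of $\I$. Hence maximizing $U_{\text{M.I.}}^G$ is equivalent to minimizing $\E_{y \mid D, \xi}\,H(G \mid D, y, \xi)$.

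Next I would analyze the post-interventional posterior $\pr(G \mid D, y, \xi)$. Fix a hypothetical ground-truth DAG $G^{\dagger} \in \mathcal{G}$. Because $\xi$ contains infinitely many samples from every $I \in \I$, one can recover each mutilated distribution $\pr^I$ exactly and therefore the skeleton of $G^{\dagger\,I}$ for every $I \in \I$. By the characterization of $\I$-Markov equivalence from \citet{IMEC2012}, the set of DAGs in $\mathcal{G}$ consistent with these recovered skeleta is exactly the $\I$-Markov equivalence class $\text{Ess}^{\I}(G^{\dagger}) \cap \mathcal{G}$, which by hypothesis has cardinality $|\text{Ess}^{\I}(G^{\dagger})|$. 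Since the likelihood is uniform on this class and zero outside of it (given uniform prior on $\mathcal{G}$), the posterior after seeing $y$ is uniform on $\text{Ess}^{\I}(G^{\dagger})$, giving
\begin{equation*}
H(G \mid D, y, \xi) = \log_2 |\text{Ess}^{\I}(G^{\dagger})|.
\end{equation*}

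Finally I would take the expectation over $y \sim \pr(y \mid D, \xi) = \E_{G^{\dagger} \mid D}\,\pr(y \mid G^{\dagger}, \xi)$. Because $H(G \mid D, y, \xi)$ depends on $y$ only through the $\I$-Markov equivalence class of whichever $G^{\dagger}$ generated it, the outer expectation reduces to an average over $G^{\dagger}$ drawn uniformly from $\mathcal{G}$, yielding
\begin{equation*}
\E_{y \mid D, \xi}\,H(G \mid D, y, \xi) = \frac{1}{|\mathcal{G}|}\sum_{G \in \mathcal{G}} \log_2 |\text{Ess}^{\I}(G)|.
\end{equation*}
Combining with the constant $H(G \mid D) = \log_2 |\mathcal{G}|$ and the constraint $|\I| \leq K$ proves the claimed characterization of $\I_{\text{M.I.}}$.

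The main obstacle is the second step: rigorously identifying the support of the infinite-sample posterior with $\text{Ess}^{\I}(G^{\dagger}) \cap \mathcal{G}$. This requires that (i) distinct interventional distributions produce, with probability one, distinguishable empirical distributions in the infinite-sample limit (so all DAGs outside the $\I$-MEC are ruled out), and (ii) DAGs inside the $\I$-MEC cannot be separated by any amount of data restricted to interventions in $\I$. Both follow from the $\I$-Markov equivalence characterization of \citet{IMEC2012} together with the standard consistency of the observational and interventional likelihoods, but they must be invoked carefully to handle the marginalization over $\theta$ in the likelihood.
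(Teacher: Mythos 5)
Your proposal is correct and follows essentially the same route as the paper's proof: with infinite samples per intervention the true DAG is recovered exactly up to its $\I$-Markov equivalence class, so the posterior entropy is $\log_2 |\text{Ess}^{\I}(G)|$ when $G$ generated the data, and averaging over the uniform prior on $\mathcal{G}$ yields the stated objective. Your write-up is in fact more careful than the paper's (which asserts the infinite-sample identification in one line), explicitly noting the uniformity of the limiting posterior on the $\I$-MEC and the subtlety of marginalizing over $\theta$, but these are refinements of the same argument rather than a different approach.
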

This result (proof in Appendix) shows that in the limiting case, mutual information selects interventions that lead to the finest expected log $\I$-MEC sizes. This limiting behavior of mutual information parallels what graph-based score functions do, such as the ones considered by \citet{hauser_MEC}, \citet{bed_MEC} and \citet{bayes_sachs}, that invoke the \emph{Meek Rules} \citep{meek_rules} to select interventions that orient the most number of edges in the $\I$-essential graphs (in expectation).   


A score function based on mutual information is particularly appealing since it not only has desirable properties in the infinite sample setting, but also does not require the MEC to be known, naturally handling the case of finite sample sizes. In particular, a score function based solely on Meek rules will not pick the same intervention twice by definition, since repeating the same intervention does not improve identifiability. As a result, adapting graph-based score functions in the finite sample regime requires first constructing an intervention set and then allocating samples instead of jointly picking and allocating samples. Mutual information, on the other hand, can pick the same intervention twice; for example if a particular intervention is very informative, selecting it twice and allocating more samples to it might lead to a greater expected decrease in entropy than a new intervention. 


\subsection{Budget Constraints}
\vspace{-0.1cm}
So far we have not specified the constraint set $C$ in \cref{opt_problem}. To this end, we assume that the experimenter has a total of $N$ samples to allocate across $B$ batches. While one could try to optimize the partition of $N$ samples across batches, in this work we study the simpler case where each batch $b$, $1 \leq b \leq B$, receives a pre-specified amount of samples $N_b$ with $\sum_{b}N_b=N$. For simplifying notation we assume throughout that $N_b = \frac{N}{B}$. 
We leave the study of adaptive batch sizes $N_b$ for future work. The constraint set then equals,
\begin{equation} \label{eq:our_constraints}
\begin{split}
C_{N, b} &\coloneqq \{\xi \in \mathbb{Z}^{\I^*}: |\xi| = N_b \},
\end{split}
\end{equation}
where the subscripts on $C$ emphasize the dependence on $N$ and $b$.
Then, the optimal design in batch $b$ is obtained by solving the following combinatorial optimization problem:
\begin{equation} \label{eq:intract_opt}
\xi^*_b \in \argmax{\xi \in  \mathbb{Z}^{\I^*} \cap C_{N, b}} \ U(\xi; D_{b-1}),
\end{equation}
where $D_{b-1} \coloneqq [D_{\xi^*_1}, \cdots, D_{\xi^*_{b-1}}]$ is all the data collected at the start of batch $b$ and $U(\xi; D_{b-1})$ could, for example, be the mutual information defined in \cref{eq:mutual_info}. Notice that while a particular form of $U(\xi; D_{b-1})$ is provided in \cref{def:exp_utility}, $U(\xi; D_{b-1})$ need not necessarily be a Bayesian utility function to fit within the framework of \cref{eq:intract_opt}.  

We now define a natural notion of consistency for any experimental design method that can be cast as an optimization routine in the form of \cref{eq:intract_opt}. Since the consistency of a utility function should not depend on a specific constraint set such as $C_{N, b}$, \cref{def:batch_consis} assumes the constraint set is arbitrary and set by the practitioner.  
\begin{ndefn} \label{def:batch_consis} 
Suppose $f(G)$ is identifiable in $\text{Ess}^{\I^*}(G^*)$, where $G^*$ is the true unknown DAG. Let $C_{N, b}$, $1 \leq b \leq B$ denote the constraints in batch $b$. 
A utility function $U(\xi)$ is \emph{budgeted batch consistent} for learning a target feature $f(G)$ if
\begin{equation*}
\pr \left( f(G) \mid D_{B} \right) \xrightarrow{\text{$\mu^*$ a.s.}} \ind(f(G) = f(G^*)), 
\end{equation*}
as $N, B \rightarrow \infty$, where $\mu^*$ is the law determined by the true unknown causal DAG $(G^*, \theta^*)$ 
\end{ndefn}

\begin{nthm} \label{thm:mutal_consistent}
$U_{\text{M.I.}}^f$ is budgeted batch consistent for \emph{single-node interventions}, i.e., when $\I^* = \{ \{1\}, \cdots, \{ p\} \}$.
\end{nthm}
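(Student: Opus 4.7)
The plan is to combine three facts. First, for the single-node family $\I^* = \{\{1\},\ldots,\{p\}\}$, a classical identifiability result (Hauser and B\"uhlmann, 2012) implies that the $\I^*$-essential graph $\text{Ess}^{\I^*}(G^*)$ coincides with $G^*$, so the $\I^*$-Markov equivalence class of $G^*$ is a singleton. Second, standard Bayesian posterior consistency: if, as the number of batches $B$ grows, each intervention in some set $\mathcal{J} \subseteq \I^*$ receives an unbounded number of samples, then the posterior $\pr(G \mid D_B)$ concentrates $\mu^*$-a.s.\ on the $\mathcal{J}$-Markov equivalence class of $G^*$ (under mild regularity of the prior and likelihood). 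Third, a dynamical argument showing that the mutual-information objective $U_{\text{M.I.}}^f$ forces every single-node intervention to be chosen in infinitely many batches. Given these three ingredients, the posterior on $G$ collapses to a point mass at $G^*$, and since $f$ is a deterministic function of $G$ the claim follows.

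The substantive step is the third ingredient. Let $\I_\infty \subseteq \I^*$ be the set of interventions selected in infinitely many batches by the optimal strategy, and suppose for contradiction that some $I_0 \in \I^* \setminus \I_\infty$. By posterior consistency, $\pr(G \mid D_b)$ concentrates $\mu^*$-a.s.\ on the $\I_\infty$-Markov equivalence class $\mathcal{M}_{\I_\infty}$. If $f$ is constant on $\mathcal{M}_{\I_\infty}$, then $\pr(f(G)\mid D_b) \to \ind(f(G) = f(G^*))$ already and we are done. Otherwise pick $G_1, G_2 \in \mathcal{M}_{\I_\infty}$ with $f(G_1) \neq f(G_2)$; since $G_1 \neq G_2$, the first fact above guarantees some single-node intervention separates their interventional laws, and this intervention necessarily lies outside $\I_\infty$. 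Any batch $\xi$ drawn entirely from $\I_\infty$ satisfies $U_{\text{M.I.}}^f(\xi; D_b) \to 0$, because the posterior-averaged distribution of $y_\xi$ stops depending on $f(G)$ once $\pr(G \mid D_b)$ concentrates on $\mathcal{M}_{\I_\infty}$. By contrast, any batch containing at least one draw from the separating intervention has $U_{\text{M.I.}}^f(\xi'; D_b)$ bounded below by a positive constant determined by the likelihood gap at that intervention. Hence for all sufficiently large $b$ the maximizer $\xi_b^*$ must include the separating intervention, contradicting $I_0 \in \I^*\setminus\I_\infty$.

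The main obstacle is making the batched comparison in the previous paragraph quantitative, since mutual information is neither additive nor linear across the $N_b$ samples in a batch: one must rule out the possibility that packing a batch with $N_b$ marginally-informative $\I_\infty$-samples beats diversifying with a single sample from the separating intervention. This can be handled using monotonicity and submodularity of mutual information together with a uniform lower bound on the per-sample information of the separating intervention once posterior concentration has set in, combined with the observation that the marginal information contributed by additional $\I_\infty$-samples decays to zero. A secondary (and more standard) ingredient is establishing the posterior-consistency fact used above in the mixed observational/interventional regime, which follows by applying Doob-type arguments to the marginal likelihood under identifiability of the $\I$-Markov equivalence class.
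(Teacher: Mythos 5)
Your proposal is correct and follows essentially the same route as the paper's proof: you define the set $\I^{\infty}$ of interventions selected infinitely often, invoke posterior concentration on the corresponding interventional Markov equivalence class, and reach a contradiction by showing that batches drawn only from $\I^{\infty}$ carry vanishing mutual information about $f(G)$ while a batch containing a separating single-node intervention outside $\I^{\infty}$ retains a strictly positive score and would therefore eventually be selected --- with your monotonicity/submodularity treatment of the batched comparison filling in precisely the step the paper dispatches with the phrase ``uniformly smaller.'' The one point you omit is the degenerate case $|\I^{\infty}| = 1$, where $\I^{\infty}$ is not a conservative family and the $\I$-MEC posterior-consistency fact you cite does not apply verbatim; the paper patches this by noting that one then learns the observational equivalence class of the $\I^{\infty}$-mutilated graph of $G^*$, after which the same contradiction argument goes through.
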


\begin{nrmk}
While Theorem~\ref{thm:mutal_consistent} may not be surprising (proof in the Appendix), we found that various utility functions that seem natural and have been proposed in earlier work are not consistent in the budgeted setting. 
In particular, in \cref{A:consis_counter} we show that the utility function proposed by \citet{bayes_sachs} is not consistent for single-node interventions. The main issue is that there are DAGs and constraint sets for which the same interventions keep getting selected, instead of selecting new interventions to fully identify $f(G)$.    
\end{nrmk}

\section{Tractable Algorithm} \label{sec:tract_algo}
\vspace{-0.1cm}
While \cref{sec_3} provides a general framework for targeted experimental design, there are several computational challenges that we have not yet addressed. The first challenge is computing $U_{\text{M.I.}}^f(\xi; D)$. This objective function requires summing over an exponential number of DAGs and marginalizing out the edge weights $\theta$. In this section, we discuss how to approximate $U_{\text{M.I.}}^f(\xi; D)$ by sampling graphs (either through MCMC or the \emph{DAG-bootstrap} \cite{bootstrap_dag}) and using the maximum likelihood estimator of $\theta$ for each graph. Taken together, these approximations not only allow the mutual information score to be computed tractably but also lead to desirable optimization properties. In particular, we prove in \cref{thm:greedy_gaurantee} that our approximate utility function is submodular. This property enables optimizing the approximate objective in a  sequential greedy fashion with provable guarantees on optimization quality.  

\subsection{Expectation over $(G,\theta)$}
\vspace{-0.1cm}
A serious problem from a computational perspective is the expectation over $(G, \theta)$ in \cref{def:exp_utility}. Since the number of DAGs grows \emph{superexponentially} with $p$, enumerating all possible DAGs is intractable. Instead, in each batch $b$, we propose to sample $T$ graphs according to the posterior $\pr(G \mid D_{b-1})$. This can be done using a variety of different \emph{Markov chain Monte-Carlo} (MCMC) samplers; see for example \citet{high_prob_dags, ellis_wong,Friedman2003,partial_order,partition_mcmc,struct_mcmc,edge_reversal, minimap_paper}. An alternative that is often faster but still achieves good performance, is approximating the posterior via a high-probability candidate set of $T$ DAGs $\mathcal{\hat{G}}_T$~\citep{high_prob_dags,bootstrap_dag}. While there are many ways to build up this set, a popular approach is through the \emph{nonparametric DAG bootstrap}~\citep{bootstrap_dag}. The main idea is to subsample the data (with replacement) $T$ times and fit a DAG learning algorithm to each of the generated datasets to construct $\mathcal{\hat{G}}_T$. Each $G \in \mathcal{\hat{G}}_T$ can then be weighted according to the ratio of unnormalized posterior probabilities,
\begin{equation} \label{eq:post_weights}
w_{G, D} \coloneqq \frac{\pr(G) \pr(D \mid G)}{\sum_{G \in \mathcal{\hat{G}}_T} \pr(G) \pr(D \mid G)}
\end{equation}
to form an approximate posterior $\hat{\pr}(G) \coloneqq w_{G, D} \ind(G \in \mathcal{\hat{G}}_T)$. The DAG learning algorithm used for this purpose must be able to handle  a mix of observational and interventional data. Two recent methods that have been developed for this purpose are given in  \citet{IMEC2012} and \citet{IGSP}. We summarize constructing an approximate posterior via the DAG bootstrap in \cref{alg:sample_dags}.
\begin{algorithm} 
\caption{\texttt{DAGBootSample}}\label{alg:sample_dags}
\hspace*{\algorithmicindent} \textbf{Input:} N datapoints $D_N$, number of samples T  \\
\hspace*{\algorithmicindent} \textbf{Output:} $T$ bootstrap DAG samples $\mathcal{\hat{G}}_T$
\begin{algorithmic}[1]
\State $\mathcal{\hat{G}}_T \gets \emptyset$
\For{$s=1:T$}
	\State $\tilde{D}_N \gets N$ datapoints sampled (with replacement) from $D_N$
    \State $G_s \gets \texttt{DAGLearner}(D_N)$ e.g. \citep{IGSP,IMEC2012} 
    \State $\mathcal{\hat{G}}_T \gets \mathcal{\hat{G}}_T \cup G_s$
\EndFor
\Return $\mathcal{\hat{G}}_T$
\end{algorithmic}
\end{algorithm}

Given $\mathcal{\hat{G}}_T$, which can be constructed from \cref{alg:sample_dags} or sampled from a Markov chain, we next discuss how to compute the expectation over $\theta$. Recall that $U_{\text{M.I.}}^f(\xi; D)$ is given by
\begin{equation} \label{eq:theta_1}
\begin{split}
& \E_{G \mid D} \left[ \E_{y \mid G, \xi} U^f_{\text{M.I.}}(y, \xi; D) \right] \\
	&= \E_{G \mid D} \left[ \E_{\theta \mid G, D} \E_{y \mid G, \theta, \xi} U^f_{\text{M.I.}}(y, \xi; D) \right].
\end{split}
\end{equation}
Instead of carrying out the expensive expectation over $\theta \mid G, D$ in \cref{eq:theta_1}, we use the MLE of $\theta$ for each sampled $G$. This approximation is justified by the \emph{Bernstein-von Mises Theorem}, which implies that  
%
%
\begin{equation} \label{eq:bvm}
\begin{split}
    \pr(\mathbb{\theta} \mid G, D) &\rightarrow N(\mle, \frac{1}{n} I(\theta_G)^{-1}),\\
    \mle &\coloneqq \argmax{\theta} \ \pr(D \mid G, \theta).
\end{split}
\end{equation}
Here, $n$ is the number of datapoints in $D$, and $I(\theta_G)$ is the Fisher information matrix of the parameter $\theta_G$, which is the asymptotic limit of the maximum likelihood estimator $\mle$ \citep[Chapter 10]{van2000asymptotic}. Therefore, the posterior distribution $\theta \mid D, G$ concentrates around $\mle$ at the standard $O(1/\sqrt{n})$ statistical rate. Hence, for moderate $n$ (e.g., when a moderate amount of observational data is provided at the start of the experimental design), \cref{eq:bvm} implies   
\begin{equation} \label{eq:theta}
\begin{split}
& \E_{G \mid D} \E_{\theta \mid G, D} \left[ \E_{y \mid G, \theta, \xi} U^f_{\text{M.I.}}(y, \xi; D) \right] \\
	& \approx \E_{G \mid D} \left[ \E_{y \mid G, \mle, \xi} U^f_{\text{M.I.}}(y, \xi; D) \right].
\end{split}
\end{equation}
In \citet[Section 6.1]{IMEC2015}, the authors provide a closed-form expression for $\mle$ when $y \mid G, \theta$ is multivariate Gaussian. In this case, $\mle$ is a simple function of the sample covariance matrix. 

\subsection{Approximating Mutual Information}
While in the previous subsection we showed how to approximate the expectations in \cref{eq:theta}, computing $U_{\text{M.I.}}^f(y, \xi)$ even for a fixed $y$ is intractable since we must sum over all possible DAGs. Recall from \cref{eq:mutual_info} that the mutual information utility function is    
\begin{equation} \label{eq:repeat_mi}
U_{\text{M.I.}}^f(y, \xi; D) = H(f \mid D) - H(f \mid D, y=y, \xi).
\end{equation}
Note that $H(f \mid D)$ is a constant and does not matter in the optimization over $\xi$. More care is required for  computing the second term in \cref{eq:repeat_mi}, since the posterior of $G$ changes as a result of observing $y$, the realizations of the interventions specified by $\xi$. We therefore cannot immediately use the samples in $\mathcal{\hat{G}}_T$ to approximate this term. To overcome this problem, we propose to use \emph{weighted importance sampling} and approximate $ H(f \mid D, y, \xi)$ by a weighted average of DAGs in $\mathcal{\hat{G}}_T$. We define the importance sample weights for DAG $G_i$, $1 \leq i \leq T$, by
%
%
\begin{equation} \label{eq:is_weights_theta}
w_i \coloneqq \frac{\pr(D, y \mid G_i, \xi)}{\pr(D \mid G_i)}.
\end{equation}
In general, $w_i$ is not equal to $\pr(y \mid G, \xi)$ since $D$ and $y$ are  dependent without conditioning on $\theta$. 
%
%
While $\pr(D, y \mid G, \xi)$ can be computed in closed-form if the prior on $\theta \mid G$ belongs to one of the families described in \citet{bge_orig}, the dependence on previous samples in the importance weights makes greedily building up the intervention set $\xi$ expensive. In particular, since \cref{eq:is_weights_theta} does not factorize, the importance weights must be recomputed with every new additional intervention, which again requires an integration over all parameters. 
Motivated by the approximation in \cref{sec:tract_algo}, where the parameters of each sampled $G \in \mathcal{\hat{G}}_T$ are not marginalized out, we instead propose using the importance sample weights   
\begin{equation} \label{eq:is_modular_weights}
\begin{split}
\hat{w}_i &\coloneqq \frac{\pr(D, y \mid G_i, \hat{\theta}^{G_i}_{\text{MLE}}, \xi)}{\pr(D \mid G_i, \hat{\theta}^{G_i}_{\text{MLE}} )} \\
&= \pr(y \mid G_i, \xi, \hat{\theta}^{G_i}_{\text{MLE}});
\end{split}
\end{equation}
%
$\hat{w}_i$ has the natural interpretation of re-weighting each DAG  by the likelihood of the newly observed data $y$. 

Recall from \cref{eq:mutual_info}, that $U_{\text{M.I.}}^f(y, \xi; D)$ is based on weighting each DAG according to its posterior probability $\pr(G \mid D) \propto \int_{\theta} \pr(D \mid G, \theta) \pr(\theta \mid G) \pr(G)$. Using the importance sample weights $\hat{w}_i$ translates into approximating the mutual information against a different posterior distribution in \cref{eq:mutual_info}, namely 
\begin{equation} \label{eq:importance_post_dist}
  \tilde{\pr}(G \mid D) \propto \pr(D \mid G, \mle) \pr(G), 
\end{equation}
which is a specific instance of an empirical Bayes approximation. In what follows, we denote the mutual information score based on the posterior in \cref{eq:importance_post_dist} by $\tilde{U}_{\text{M.I.}}^f(y, \xi; D)$.

%
%


\subsection{Greedy Optimization}
The cardinality constraint $|\xi| = N_b$ makes our optimization problem a difficult integer program. In the following, we show how to overcome this final computational hurdle using a generalized notion of \emph{submodularity} for multisets \citep{dr_submodular}. In particular, we prove that greedily selecting interventions provides a $(1 - \frac{1}{e})$ guarantee on optimization quality. 
\begin{algorithm} 
\caption{\texttt{GreedyDesign}}\label{alg:greedy_algo}
\hspace*{\algorithmicindent} \textbf{Input:} Utility function $U$, number of samples $N_b$, intervention family $\I^*$  \\
\hspace*{\algorithmicindent} \textbf{Output:} Multiset of interventions $\xi$
\begin{algorithmic}[1]
\State $\xi \gets \emptyset$
\For{$s=1:N_b$}
		\State $I^* \in \argmax{I \in I^*} \ U(\xi \cup I)$
        \State $\xi \gets \xi \cup I^*$
\EndFor
\Return $\xi$
\end{algorithmic}
\end{algorithm}
\begin{nthm}\label{thm:greedy_gaurantee} Suppose $f(G) = G$ i.e. the goal is to recover the full graph as in \citet{koller_active,berger,murphy_active_learning,bayes_sachs}. Then the difference between the global optimum
\begin{equation*}
v^*_b = \max_{\xi \in  \mathbb{Z}^{\I^*} \cap C_{N, b}} \ \E_{G \mid D_{b-1}} \E_{y \mid G, \mle, \xi} \ \tilde{U}_{\text{M.I.}}^f(y, \xi; D) 
\end{equation*}
and $\tilde{v}_b = \texttt{GreedyDesign}(\tilde{U}_{\text{M.I.}}^f, N_b, \I^*)$, the output of \cref{alg:greedy_algo} in batch $b$,
satisfies $\tilde{v}_b \geq (1 - \frac{1}{e})v^*_b$, where $C_{N, b}$ is defined as in \cref{eq:our_constraints}.
\end{nthm}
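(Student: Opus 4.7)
The plan is to recognize the objective as the mutual information $I(G; y \mid D_{b-1}, \xi)$ evaluated against the empirical-Bayes posterior $\tilde{\pr}(G \mid D_{b-1})$ with $\theta$ pinned at $\mle$, and then invoke submodularity of mutual information when observations are conditionally independent given the latent $G$. Concretely, I would first rewrite
\[
  \tilde{U}_{\text{M.I.}}^f(\xi; D_{b-1})
  = H(G \mid D_{b-1}) - \mathbb{E}_{y}\bigl[H(G \mid D_{b-1}, y, \xi)\bigr]
\]
so that the leading term is a constant in $\xi$ and the optimization is equivalent to maximizing $I(G; y \mid D_{b-1}, \xi)$, where the joint law of $(G, y)$ is $\tilde{\pr}(G \mid D_{b-1}) \prod_{I \in \xi} \pr(y_I \mid G, \xi, \hat{\theta}^G_{\text{MLE}})$. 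The crucial observation is that, once $\theta$ is deterministically set to $\mle$ for each sampled $G$, the components $y_I$ for distinct occurrences $I \in \xi$ are conditionally independent given $G$, so the joint distribution has the classical ``naive Bayes'' structure used in sensor-selection arguments.

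Second, I would establish two structural properties of the set-function $F(\xi) := I(G; y_\xi \mid D_{b-1})$ lifted to multisets $\xi \in \mathbb{Z}_{\geq 0}^{\I^*}$. Monotonicity follows from the standard ``information never hurts'' inequality: for any $\xi \leq \xi'$ coordinate-wise, $H(G \mid y_{\xi'}) \leq H(G \mid y_\xi)$ in expectation over $y$, so $F(\xi') \geq F(\xi)$. For diminishing returns, I would show that for all $\xi \leq \xi'$ and every intervention $I \in \I^*$,
\[
  F(\xi + e_I) - F(\xi) \;\geq\; F(\xi' + e_I) - F(\xi'),
\]
i.e., $F$ is DR-submodular on the integer lattice in the sense of \citet{dr_submodular}. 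This is the standard submodularity of mutual information with conditionally independent observations: the marginal gain of adding one more sample of intervention $I$ equals $I(G; y_I^{\text{new}} \mid y_\xi)$, and conditioning on more observations $y_{\xi'\setminus\xi}$ can only reduce this conditional mutual information, by the conditional independence of $y_I^{\text{new}}$ from $y_{\xi'\setminus\xi}$ given $G$.

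Third, with monotonicity and DR-submodularity in hand, I would apply the generalized Nemhauser--Wolsey--Fisher guarantee on the integer lattice: for a monotone, nonnegative, DR-submodular $F$ with cardinality constraint $|\xi| \leq N_b$, the greedy procedure that in each of the $N_b$ iterations adds the single unit $e_I$ with $I \in \argmax_{I} F(\xi \cup \{I\})$ achieves at least $(1-1/e)$ of the optimum; see \citet{dr_submodular}. Since \texttt{GreedyDesign} implements exactly this procedure with $U = \tilde{U}_{\text{M.I.}}^f$ and the constant $H(G\mid D_{b-1})$ cancels in the relative comparison (both $v_b^\star$ and $\tilde v_b$ shift by the same constant), we conclude $\tilde v_b \geq (1-1/e)\, v_b^\star$.

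The main obstacle I expect is the DR-submodularity step, not monotonicity or the application of the greedy theorem. Specifically, I would have to argue carefully that the empirical-Bayes posterior $\tilde{\pr}(G \mid D_{b-1})$ and the per-graph point-estimate law $\pr(y_I \mid G, \hat{\theta}^G_{\text{MLE}}, \xi)$ together preserve the conditional independence of the $y_I$ given $G$, so that the classical Krause--Guestrin-style submodularity for mutual information transfers to the integer lattice, and that ``multiple copies of the same intervention'' contribute independent $y$-coordinates rather than the same sample. Once this conditional product structure is made explicit, DR-submodularity reduces to the monotonicity of conditional entropy under further conditioning, and the rest of the argument is routine.
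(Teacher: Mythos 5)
Your proposal is correct and follows essentially the same route as the paper: the paper's proof likewise reduces the objective to the mutual information $I\bigl((G, \mle), Y_{\xi}\bigr)$, proves DR-submodularity on the integer lattice via exactly your argument (the marginal gain of one more sample of an intervention is a conditional mutual information, which shrinks under further conditioning because the $Y$-coordinates are conditionally independent given $(G, \mle)$ and entropy decreases with conditioning), and then invokes a lattice greedy guarantee (the paper cites \citet[Theorem 2.4]{submod_budget_alloc} where you cite the generalized Nemhauser--Wolsey--Fisher result of \citet{dr_submodular}, but these are the same machinery). The conditional-independence subtlety you flag as the main obstacle is precisely the step the paper handles, and handles the same way, so there is no gap.
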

\begin{rmk}
We conjecture that \cref{thm:greedy_gaurantee} holds for arbitrary functions $f$, but we currently only have a proof (see Appendix) for the case when $f(G) = G$. 
\end{rmk}

We conclude this section by summarizing the developed \emph{Active Budgeted Causal Design Strategy} (\emph{ABCD-Strategy}) in \cref{alg:abc_algo} and then summarizing all the proposed approximations.  
\begin{algorithm}[h]
\label{main_alg}
\caption{\texttt{ABCD}-Strategy}\label{alg:abc_algo}
\hspace*{\algorithmicindent} \textbf{Input:} Target functional $f$, interventional data collected $D_{b-1}$, observational data $D_{obs}$, number of batch samples $N_b$, intervention family $\I^*$, number of DAGs $T$, number of datasets $M$   \\
\hspace*{\algorithmicindent} \textbf{Output:} Multiset of interventions $\xi$
\begin{algorithmic}[1]
\State $\xi \gets \emptyset$
\State $G_T \gets$\texttt{DAGBootSample}($[D_{obs} ,D_{b-1}]$, $T$) 
\State Compute $\hat{U}^f_{\text{M.I.}}$ via \cref{eq:final_approx} \\
\Return \texttt{GreedyDesign}($\hat{U}^f_{\text{M.I.}}$, $N_b$, $\I^*$)
\end{algorithmic}
\end{algorithm}

In terms of approximations, \cref{eq:theta} implies
\begin{equation} \label{eq:final_approx}
\begin{split}
& \E_{G, \theta \mid D} \left[\E_{y \mid G, \theta, \xi}  \tilde{U}^f_{\text{M.I.}}(y, \xi; D) \right] \\
& \approx \E_{G, \mid D} \left[ \E_{y \mid G, \mle, \xi} \ \tilde{U}_{\text{M.I.}}(y, \xi; D) \right] \\
& \approx \sum_{t=1}^T \sum_{m=1}^M \tilde{U}^f_{\text{M.I.}}(y_{tm}, \xi; D), \\
& \qquad \text{s.t. } y_{tm} \overset{\text{i.i.d}}{\sim} y \mid G_t, \mle, \xi \\
& \approx \sum_{t=1}^T \sum_{m=1}^M \hat{U}^f_{\text{M.I.}}(y_{tm}, \xi; D), \ \text{where}
\end{split}
\end{equation}
\begin{equation*}
\begin{split}
& \hat{U}^f_{\text{M.I.}}(y_{tm}, \xi; D) \coloneqq H_{1}(f \mid D) - H_{2}(f \mid D), \\
& \hat{\pr}_{1}(G \mid D) \coloneqq w_{G, D} \ind(G \in \mathcal{\hat{G}}_T), \\
& \hat{\pr}_{2}(G \mid D, y, \xi) \coloneqq \frac{w_{G, D}\pr(y \mid G, \xi, \mle)}{\sum_{t=1}^T w_{G_t, D}\pr(y \mid G_t, \xi, \hat{\theta}^{G_t}_{\text{MLE}})}, 
\end{split}
\end{equation*}
where $M$ is the number of synthetic datasets generated, $H_{1}$ and $H_{2}$ are the entropies induced by $\hat{\pr}_1$ and $\hat{\pr}_2$ respectively, and $w_{G, D}$ is defined in \cref{eq:post_weights}. Note that $\hat{U}^f_{\text{M.I.}}$ is based on the importance sample weights given in \cref{eq:is_modular_weights}.
\begin{nprop} \label{prop:total_runtime}
The total runtime of \cref{alg:greedy_algo} with input utility function $ \hat{U}^f_{\text{M.I.}}$ is $O(pT\kappa^3 + |\I^*| MT^2N_b\kappa p)$, where $\kappa$ is the maximum indegree of a graph in $\mathcal{\hat{G}}_T$. 
\end{nprop}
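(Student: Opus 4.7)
The plan is to split the runtime into two pieces: a one-time preprocessing cost of fitting MLE parameters for each bootstrap DAG, and a per-candidate cost inside the greedy loop that is small because adding one intervention to $\xi$ perturbs the synthetic data and importance weights only incrementally.

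For the first term $O(pT\kappa^3)$: before \cref{alg:greedy_algo} ever evaluates $\hat{U}^f_{\text{M.I.}}$, \cref{eq:final_approx} needs $\mle$ for each of the $T$ DAGs in $\mathcal{\hat{G}}_T$. In the Gaussian setting of \citet[Section 6.1]{IMEC2015}, $\mle$ reduces to regressing each node on its (at most $\kappa$) parents from precomputed sample covariances; each regression is a $k \times k$ linear solve costing $O(\kappa^3)$. Summing over $p$ nodes and $T$ graphs gives $O(pT\kappa^3)$.

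For the second term $O(|\I^*| MT^2 N_b \kappa p)$, fix iteration $s$ of \cref{alg:greedy_algo}. The key implementation observation is that for every pair $(t,m) \in [T] \times [M]$ we cache the current synthetic dataset $y_{tm}$ drawn from $\pr(\cdot \mid G_t, \hat{\theta}^{G_t}_{\text{MLE}}, \xi)$ together with its $T$ unnormalized importance weights $\pr(y_{tm} \mid G_j, \xi, \hat{\theta}^{G_j}_{\text{MLE}})$ from \cref{eq:is_modular_weights}. To evaluate $\hat{U}^f_{\text{M.I.}}(\xi \cup I)$ for a candidate $I$, we draw just one additional datapoint $y'_{tm}$ from the mutilated graph $G_t^I$ by a single topological sweep in $O(p\kappa)$ time, and update each of the $T$ cached weights by multiplying in the one new likelihood factor $\pr(y'_{tm} \mid G_j, I, \hat{\theta}^{G_j}_{\text{MLE}})$, again $O(p\kappa)$ per factor. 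Aggregating over all $(t,m)$ and all $j \in [T]$ gives $O(MT^2 p\kappa)$ per candidate; the subsequent renormalization to form $\hat{\pr}_2$ and the two entropies $H_1, H_2$ cost only $O(MT)$ and are absorbed. Multiplying by $|\I^*|$ candidates and by $N_b$ outer iterations yields the second term, and adding the preprocessing cost proves the claim.

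The main technical point will be justifying the incremental update. It relies on the fact that under the mutilated-graph likelihood the design distribution factorizes as $\pr(y \mid G, \xi \cup I, \theta) = \pr(y_{\xi} \mid G, \xi, \theta)\,\pr(y' \mid G, I, \theta)$, so that when the greedy argmax commits to $I^*$ we can retain the scratch work (new sample and updated weights) for $I^*$ and discard the $|\I^*|-1$ others, avoiding an $O(s)$ resample at iteration $s+1$. Without this factorization one would incur an extra factor of $N_b$ from naively recomputing the full $y_{tm}$, so the proof essentially amounts to exhibiting this caching scheme and accounting its cost as above.
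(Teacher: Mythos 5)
Your proof is correct, and its skeleton is the same two-term decomposition as the paper's: an $O(pT\kappa^3)$ preprocessing cost for the $T$ maximum-likelihood fits (the paper invokes the closed form from \citet[pg.~17]{IMEC2012}, which is equivalent to your per-node $O(\kappa^3)$ regressions on at most $\kappa$ parents), plus a greedy-loop cost of $O(|\I^*| M T^2 N_b \kappa p)$. Where you genuinely add something is in the second term: the paper's proof simply asserts that $\hat{U}^f_{\text{M.I.}}$ ``requires a total of $|\I^*| M N_b T^2$ samples,'' each processed in $O(p\kappa)$ time, without addressing why the growing size of $\xi$ over the $N_b$ greedy iterations does not inflate this count. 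That operation count is valid only under precisely the implementation you exhibit --- caching each $y_{tm}$ and its $T$ unnormalized weights from \cref{eq:is_modular_weights}, drawing a single incremental sample per $(t,m)$ per candidate, and updating each weight by one multiplicative likelihood factor, all licensed by the factorization of the mutilated-graph likelihood over independent interventional samples. As you correctly note, a naive implementation that redraws the full $y_{tm}$ at each iteration would incur an extra factor of $N_b$, so your caching argument is the missing justification for the paper's stated bound rather than an alternative to it. One small slip: the per-candidate cost of normalizing $\hat{\pr}_{2}$ and computing the entropies is $O(MT^2)$, not $O(MT)$, since each of the $MT$ synthetic datasets requires normalizing and summing over $T$ graphs (this matches the paper's own $O(MT^2)$ figure for the entropy evaluation); it is still dominated by the $O(MT^2 p\kappa)$ sampling-and-weighting cost, so your conclusion is unaffected.
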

See \cref{A:runtime_proof} for the proof of \cref{prop:total_runtime}.
\section{Experiments} \label{sec:experiments}

We begin by considering a simple case to demonstrate the behavior of our ABCD-strategy under easily interpretable conditions. Consider the chain graph on $2m -1$ nodes,
\begin{align}
1 \rightarrow 2 \rightarrow \ldots \rightarrow m \rightarrow \ldots \rightarrow p=2m-1.
\end{align}\label{eq:line-graph}
The corresponding essential graph is completely undirected, and the MEC has $2m - 1$ members, one with each node as the source. Assume that sufficient observational data is available to identify the MEC, and we are interested in fully identifying the DAG. Then, our ABCD-strategy selects interventions in order to minimize the expected entropy of the posterior over this MEC. Given a limit of one intervention per batch but infinite samples per batch, \cref{prop:infinite_samps} implies the expected entropy after intervening at node $i$ or $2m - i$, $1 \leq i \leq m$, is 
$$ \frac{1}{2m-1} \Big( \sum_{j < i} \log (i-2) + \sum_{j > i} \log (m-(i+2)) \Big),$$
\begin{figure}[b]
\centering
\includegraphics[width=.4\textwidth]{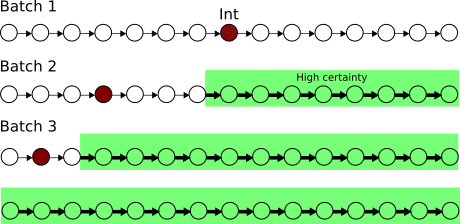}
\caption{Illustration of active learning on a chain graph, beginning with a known MEC on a simulated dataset with $p=15$ nodes. The brown circles indicate the interventions selected in each batch.}
\vspace*{-.1in}
\label{bisection}
\end{figure}%
which is minimized by choosing the midpoint $i = m$. Analogously, we see that the updated $\{\emptyset, \{m\}\}$-MEC is of the same form, so in the second batch, the optimal intervention will be halfway through the remaining nodes. This process of bisection is illustrated in Figure~\ref{bisection} and matches the behavior of our algorithm even in the finite-sample regime as described next.

Figure \ref{boxplot} illustrates the performance of our ABCD-strategy on fifty 11-node chain graphs with random edge weights sampled from $[-1, -.25] \cup [.25, 1]$. For comparison, we consider a random intervention strategy that uniformly distributes the samples in each batch to $k$ interventions picked uniformly at random, where $k$ is the maximum number of unique interventions allowed per batch. Whereas the median-performing random strategy barely reduces the entropy, the ABCD-strategy reduces the entropy significantly in all runs. When all $k$ interventions are picked for the same batch, so that ABCD receives no feedback, the median-performing run of active learning still reduces the entropy as much as the best-performing runs of the random strategy. 

\begin{figure}[!t]
\centering
\includegraphics[width=.35\textwidth]{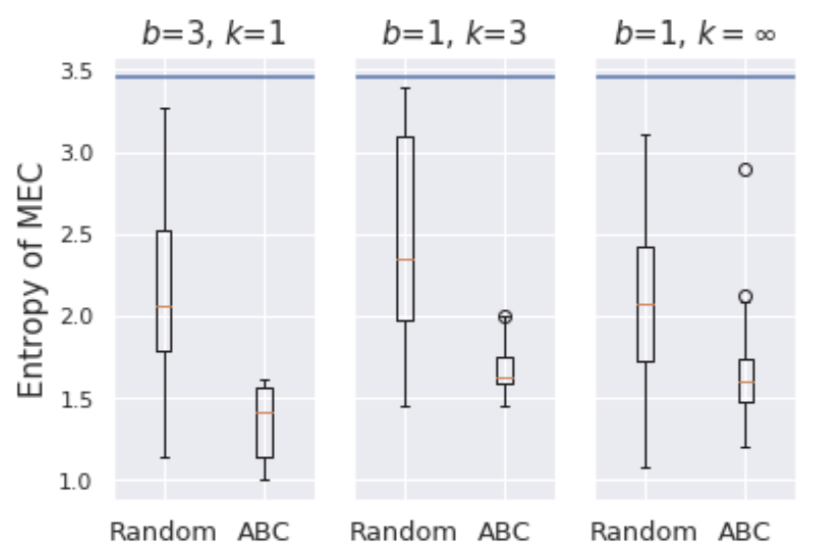}
\caption{Box plots for 50 runs of the random strategy versus our ABCD-strategy on the graph in Figure~\ref{eq:line-graph} with $p = 11$ and $n = 30$ samples. The horizontal line indicates the entropy of the prior distribution, i.e.~uniform over the MEC. Note that $k=\infty$ corresponds to the case with  no constraints on the number of unique interventions.}
\label{boxplot}
\end{figure}

Having demonstrated the behavior of ABCD for a simple case, we now analyze the performance of our method on more general DAGs. The skeleton of each graph is sampled from an Erd\"{o}s-R\'{e}nyi model with density $\rho=0.25$. The edges of these graphs are directed by sampling a permutation of the nodes uniformly at random and orienting the edges accordingly. To avoid long runtimes when enumerating the MEC, we disposed of graphs with more than 100 members in their MEC.\footnote{From a sample of 10,000 graphs, only 54 had MEC size greater than 100. Based on the results by \citet{mec_size}, we expect the MECs to be typically small.} When the MEC is known, we may define a variant of the random strategy, \emph{Chordal-Random}, which only intervenes on nodes that are in chordal components of the essential graph, i.e., nodes adjacent to at least one undirected edge. Since the Meek rules can only propagate by intervening within chordal components, Chordal-Random is a more fair baseline strategy for comparison than simple random sampling.   
\begin{figure}
\centering
\begin{subfigure}{.35\textwidth}
\centering
\includegraphics[width=\textwidth]{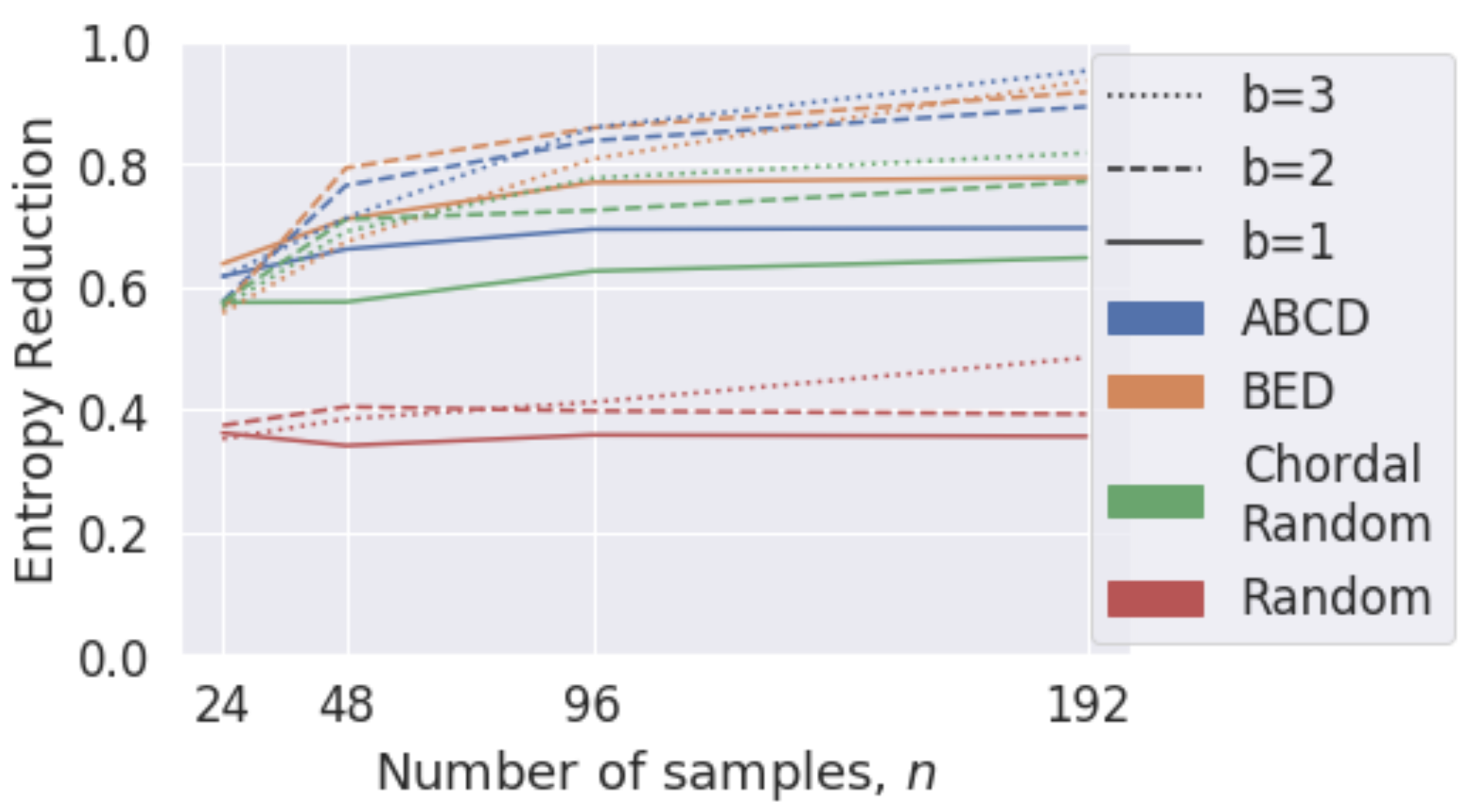}
\caption{Single MEC}\label{fig:entropy_reduction}
\end{subfigure}

\begin{subfigure}{.38\textwidth}
\centering
\includegraphics[width=\textwidth]{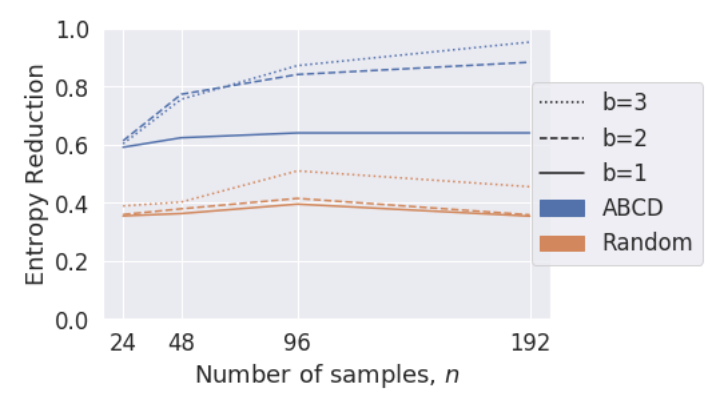}
\caption{Multiple MECs.}\label{fig:entropy_reduction_multiple}
\end{subfigure}

\caption{Performance of intervention strategies for batch sizes $b$ as a function of the total number of samples, computed from 50 Erd\"{o}s-R\'{e}nyi DAGs with density $\rho = 0.25$.}
\vspace*{-.2cm}
\end{figure}

Figure \ref{fig:entropy_reduction} demonstrates the improvement in selecting interventions using the ABCD-strategy as compared to Chordal-Random when the number of unique interventions per batch is bounded by one. The entropy reduction for an interventional data set $D_{\xi}$ is defined as $\frac{H(G) - H(G | D_{\xi})}{H(G)}$, and it is used as a metric so that MECs of different sizes are comparable. Since the number of total possible unique interventions is $kB$, an increase in the number of batches also increases the variability of the interventions, reflected in the increase of entropy reduction with batch size. Already with only 192 samples and 3 total batches, our ABCD-strategy is able to learn most graphs with complete certainty. The comparable performance of the Budgeted Experiment Design (BED) strategy \citep{bed_MEC} suggests that for the given experimental setup, the interventions that orient the most edges correspond well to those that most reduce entropy as we discussed in \cref{prop:infinite_samps}. Figure \ref{fig:entropy_reduction_multiple} shows that the performance of the ABCD-strategy remains strong even when the MEC of the graph is not known. Specifically, up to 3 additional MECs were generated by randomly flipping non-covered edges that did not create cycles, and again only graphs for which the union of these MECs had cardinality less than 100 were kept. Note that we are not able to compare with BED since BED requires that the MEC is known.
\begin{figure}[!t]
\centering
\includegraphics[width=.2\textwidth]{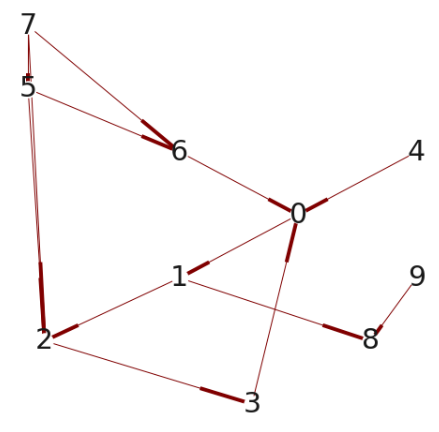}
\includegraphics[width=.3\textwidth]{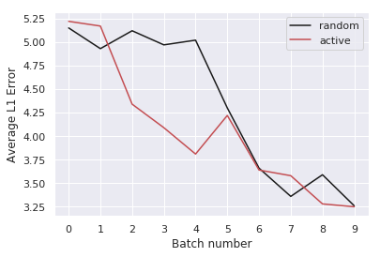}
\caption{Top: DREAM4 ground truth 10-node network. Bottom: Performance of intervention strategies on predicting the descendants of gene $0$.}
\label{fig:dream4}
\vspace*{-.2cm}
\end{figure}

{\bf DREAM4 Synthetic Dataset.} Finally, we applied our experimental design strategy to gene expression data from the DREAM4 10-node in-silico network reconstruction challenge \citep{dream4_data}. These data are generated from stochastic differential equations and simulate microarray data of gene regulatory networks. We constructed an observational dataset from the wild-type, multifactorial perturbation, and time $0$ time-series samples (16 samples in total), and similarly, interventional datasets from the knockdown and knockout samples (2 samples each).

Previous work on experimental design 
applied to biological datasets \citep{berger} has focused on learning the entire network. In practice, practitioners may be specifically interested in performing experiments to elucidate a functional of the network, such as the pathway or local network surrounding a gene of interest. To emulate this setting, we applied our ABCD-strategy towards learning the \emph{downstream genes} of select genes from the true network (Figure \ref{fig:dream4}, top). Despite high variations in learning due to the small size of the dataset, we observed an improvement over the random strategy for several central genes (\cref{fig:dream4}, bottom; \cref{fig:dream4-supp}). These results illustrate the promise of applying targeted experimental design for applications to genomics.

\section{Concluding Remarks}
We proposed \emph{Active Budgeted Causal Design Strategy} (\emph{ABCD-Strategy}), an experimental method based on optimal Bayesian experimental design with provable guarantees on approximation quality. Empirically, we demonstrated that ABCD yields considerable boosts over random sampling for both targeted and full causal structure discovery. 
Such experimental design strategies are particularly relevant for applications to genomics, where the number of possible experiments is huge due to the possibility of intervening on combinations of genes.

\section*{Acknowledgements}

R.~Agrawal was partially supported by IBM. K.D. Yang was supported by an NSF graduate fellowship and ONR (N00014-18-1-2765).  C.~Uhler was partially supported by NSF (DMS-1651995), ONR (N00014-17-1-2147 and N00014-18-1-2765), IBM, and a Sloan Fellowship. 

\bibliographystyle{abbrvnat} 
\bibliography{references}

\begin{thebibliography}{36}
\providecommand{\natexlab}[1]{#1}
\providecommand{\url}[1]{\texttt{#1}}
\expandafter\ifx\csname urlstyle\endcsname\relax
  \providecommand{\doi}[1]{doi: #1}\else
  \providecommand{\doi}{doi: \begingroup \urlstyle{rm}\Url}\fi

\bibitem[Agrawal et~al.(2018)Agrawal, Broderick, and Uhler]{minimap_paper}
R.~Agrawal, T.~Broderick, and C.~Uhler.
\newblock Minimal {I-MAP MCMC} for scalable structure discovery in causal {DAG}
  models.
\newblock In \emph{International Conference on Machine Learning}, 2018.

\bibitem[Andersson et~al.(1997)Andersson, Madigan, and Perlman]{andersson97}
S.~A. Andersson, D.~Madigan, and M.~D. Perlman.
\newblock A characterization of {Markov} equivalence classes for acyclic
  digraphs.
\newblock \emph{Annals of Statistics}, 25\penalty0 (2):\penalty0 505--541,
  1997.

\bibitem[Chaloner and Verdinelli(1995)]{bayes_exp_design}
K.~Chaloner and I.~Verdinelli.
\newblock {Bayesian} experimental design: A review.
\newblock \emph{Statistical Science}, 10:\penalty0 273--304, 1995.

\bibitem[Cho et~al.(2016)Cho, Berger, and Peng]{berger}
H.~Cho, B.~Berger, and J.~Peng.
\newblock Reconstructing causal biological networks through active learning.
\newblock \emph{PLoS ONE}, 2016.

\bibitem[Dixit et~al.(2016)Dixit, Parnas, Li, Chen, Fulco, Jerby-Arnon,
  Marjanovic, Dionne, Burks, Raychowdhury, Adamson, Norman, Lander, Weissman,
  Friedman, and Regev]{perturb_seq}
A.~Dixit, O.~Parnas, B.~Li, J.~Chen, C.~Fulco, L.~Jerby-Arnon, N.~Marjanovic,
  D.~Dionne, T.~Burks, R.~Raychowdhury, B.~Adamson, T.~Norman, E.~Lander,
  J.~Weissman, N.~Friedman, and A.~Regev.
\newblock Perturb-seq: dissecting molecular circuits with scalable single-cell
  {RNA} profiling of pooled genetic screens.
\newblock \emph{Cell}, pages 1853--1866, 2016.

\bibitem[Ellis and Wong(2008)]{ellis_wong}
B.~Ellis and W.~H. Wong.
\newblock Learning causal {Bayesian} network structures from experimental data.
\newblock \emph{Journal of the American Statistical Association}, 103:\penalty0
  778--789, 2008.

\bibitem[Friedman and Koller(2003)]{Friedman2003}
N.~Friedman and D.~Koller.
\newblock Being {Bayesian} about network structure. {A} {Bayesian} approach to
  structure discovery in {Bayesian} networks.
\newblock \emph{Machine Learning}, 50:\penalty0 95--125, 2003.

\bibitem[Friedman et~al.(1999)Friedman, Goldszmidt, and Wyner]{bootstrap_dag}
N.~Friedman, M.~Goldszmidt, and A.~J. Wyner.
\newblock Data analysis with {Bayesian} networks: A bootstrap approach.
\newblock In \emph{Proceedings of the Fifteenth Conference on Uncertainty in
  Artificial Intelligence}, 1999.

\bibitem[Friedman et~al.(2000)Friedman, Linial, Nachman, and Pe'er]{friedman00}
N.~Friedman, M.~Linial, I.~Nachman, and D.~Pe'er.
\newblock Using {Bayesian} networks to analyze expression data.
\newblock \emph{Journal of Computational Biology}, 7\penalty0 (3-4):\penalty0
  601--620, 2000.

\bibitem[Geiger and Heckerman(1999)]{bge_orig}
D.~Geiger and D.~Heckerman.
\newblock Parameter priors for directed acyclic graphical models and the
  characterization of several probability distributions.
\newblock In \emph{Proceedings of the Fifteenth Conference on Uncertainty in
  Artificial Intelligence}, 1999.

\bibitem[Ghassami et~al.(2018)Ghassami, Salehkaleybar, Kiyavash, and
  Bareinboim]{bed_MEC}
A.~Ghassami, S.~Salehkaleybar, N.~Kiyavash, and E.~Bareinboim.
\newblock Budgeted experiment design for causal structure learning.
\newblock In \emph{International Conference on Machine Learning}, 2018.

\bibitem[Gillispie and Perlman(2001)]{mec_size}
S.~B. Gillispie and M.~D. Perlman.
\newblock Enumerating {Markov} equivalence classes of acyclic digraph models.
\newblock In \emph{Proceedings of the 17th Conference in Uncertainty in
  Artificial Intelligence}, 2001.

\bibitem[Grzegorczyk and Husmeier(2008)]{edge_reversal}
M.~Grzegorczyk and D.~Husmeier.
\newblock Improving the structure {MCMC} sampler for {Bayesian} networks by
  introducing a new edge reversal move.
\newblock \emph{Machine Learning}, 71:\penalty0 265--305, 2008.

\bibitem[Hauser and B\"{u}hlmann(2012)]{IMEC2012}
A.~Hauser and P.~B\"{u}hlmann.
\newblock Characterization and greedy learning of interventional {Markov}
  equivalence classes of directed acyclic graphs.
\newblock \emph{Journal of Machine Learning Research}, 13\penalty0
  (1):\penalty0 2409--2464, 2012.

\bibitem[Hauser and B{\"{u}}hlmann(2014)]{hauser_MEC}
A.~Hauser and P.~B{\"{u}}hlmann.
\newblock Two optimal strategies for active learning of causal models from
  interventional data.
\newblock \emph{International Journal of Approximate Reasoning}, 55:\penalty0
  926--939, 2014.

\bibitem[Hauser and B\"{u}hlmann(2015)]{IMEC2015}
A.~Hauser and P.~B\"{u}hlmann.
\newblock Jointly interventional and observational data: estimation of
  interventional {Markov} equivalence classes of directed acyclic graphs.
\newblock \emph{Journal of the Royal Statistical Society Series B}, 77\penalty0
  (1):\penalty0 291--318, 2015.

\bibitem[Heckerman et~al.(1997)Heckerman, Meek, and Cooper]{high_prob_dags}
D.~Heckerman, C.~Meek, and G.~Cooper.
\newblock A {Bayesian} approach to causal discovery.
\newblock Technical report, Microsoft Research, 1997.

\bibitem[Kuipers and Moffa(2017)]{partition_mcmc}
J.~Kuipers and G.~Moffa.
\newblock Partition {MCMC} for inference on acyclic digraphs.
\newblock \emph{Journal of the American Statistical Association}, 112:\penalty0
  282--299, 2017.

\bibitem[Kuipers et~al.(2014)Kuipers, Moffa, and Heckerman]{bge_score}
J.~Kuipers, G.~Moffa, and D.~Heckerman.
\newblock Addendum on the scoring of {Gaussian} directed acyclic graphical
  models.
\newblock \emph{The Annals of Statistics}, 42:\penalty0 1689--1691, 2014.

\bibitem[Lauritzen(1996)]{lauritzen_book}
S.~Lauritzen.
\newblock \emph{Graphical Models}.
\newblock Oxford University Press, 1996.

\bibitem[Madigan and York(1995)]{struct_mcmc}
D.~Madigan and J.~York.
\newblock {Bayesian} graphical models for discrete data.
\newblock \emph{International Statistical Review}, 63:\penalty0 215--232, 1995.

\bibitem[Murphy(2001)]{murphy_active_learning}
K.~Murphy.
\newblock Active learning of causal {Bayes} net structure.
\newblock Technical report, 2001.

\bibitem[Ness et~al.(2018)Ness, Sachs, Mallick, and Vitek]{bayes_sachs}
R.~O. Ness, K.~Sachs, P.~Mallick, and O.~Vitek.
\newblock A {Bayesian} active learning experimental design for inferring
  signaling networks.
\newblock \emph{Journal of Computational Biology}, 25\penalty0 (7):\penalty0
  709--725, 2018.

\bibitem[Niinimaki et~al.(2016)Niinimaki, Parviainen, and
  Koivisto]{partial_order}
T.~Niinimaki, P.~Parviainen, and M.~Koivisto.
\newblock Structure discovery in {Bayesian} networks by sampling partial
  orders.
\newblock \emph{Journal of Machine Learning Research}, 17:\penalty0 2002--2048,
  2016.

\bibitem[Pearl(2003)]{pearl03}
J.~Pearl.
\newblock Causality: Models, reasoning, and inference.
\newblock \emph{Econometric Theory}, 19\penalty0 (675-685):\penalty0 46, 2003.

\bibitem[Robins et~al.(2000)Robins, Hernan, and Brumback]{robins00}
J.~M. Robins, M.~A. Hernan, and B.~Brumback.
\newblock Marginal structural models and causal inference in epidemiology,
  2000.

\bibitem[Schaffter et~al.(2011)Schaffter, Marbach, and Floreano]{dream4_data}
T.~Schaffter, D.~Marbach, and D.~Floreano.
\newblock {GeneNetWeaver}: in silico benchmark generation and performance
  profiling of network inference methods.
\newblock \emph{Bioinformatics}, 27:\penalty0 2263--2270, 2011.

\bibitem[Soma and Yoshida(2016)]{dr_submodular}
T.~Soma and Y.~Yoshida.
\newblock Maximizing monotone submodular functions over the integer lattice.
\newblock In \emph{International Conference on Integer Programming and
  Combinatorial Optimization}, pages 325--336. Springer, 2016.

\bibitem[Soma et~al.(2014)Soma, Kakimura, Inaba, and
  Kawarabayashi]{submod_budget_alloc}
T.~Soma, N.~Kakimura, K.~Inaba, and K.~Kawarabayashi.
\newblock Optimal budget allocation: Theoretical guarantee and efficient
  algorithm.
\newblock In \emph{International Conference on International Conference on
  Machine Learning}, 2014.

\bibitem[Spirtes et~al.(2000)Spirtes, Glymour, and Scheines]{causality_book}
P.~Spirtes, C.~Glymour, and R.~Scheines.
\newblock \emph{Causation, Prediction, and Search}.
\newblock MIT press, 2nd edition, 2000.

\bibitem[Tong and Koller(2001)]{koller_active}
S.~Tong and D.~Koller.
\newblock Active learning for structure in {Bayesian} networks.
\newblock In \emph{International Joint Conference on Artificial Intelligence},
  2001.

\bibitem[Van~der Vaart(2000)]{van2000asymptotic}
A.~W. Van~der Vaart.
\newblock \emph{Asymptotic statistics}, volume~3.
\newblock Cambridge university press, 2000.

\bibitem[Verma and Pearl(1991)]{verma90}
T.~S. Verma and J.~Pearl.
\newblock Equivalence and synthesis of causal models.
\newblock In \emph{Uncertainty in Artificial Intelligence}, volume~6, page 255,
  1991.

\bibitem[Verma and Pearl(1992)]{meek_rules}
T.~S. Verma and J.~Pearl.
\newblock An algorithm for deciding if a set of observed independencies has a
  causal explanation.
\newblock In \emph{Uncertainty in Artificial Intelligence}, 1992.

\bibitem[Wang et~al.(2017)Wang, Solus, Yang, and Uhler]{IGSP}
Y.~Wang, L.~Solus, K.~Yang, and C.~Uhler.
\newblock Permutation-based causal inference algorithms with interventions.
\newblock In \emph{Advances in Neural Information Processing Systems}, pages
  5824--5833, 2017.

\bibitem[Yang et~al.(2018)Yang, Katcoff, and Uhler]{yang2018characterizing}
K.~D. Yang, A.~Katcoff, and C.~Uhler.
\newblock Characterizing and learning equivalence classes of causal {DAGs}
  under interventions.
\newblock In \emph{International Conference on Machine Learning}, 2018.

\end{thebibliography}

\clearpage

\appendix

\section{Proofs}

\subsection{Proof of \cref{prop:infinite_samps}}
Given infinite samples per intervention $I \in \I$, $G^*$ is recovered up to its $\I$-Markov equivalence class. Hence, the resulting entropy after placing an infinite number of samples at each intervention is equal to $\log_2 |\text{Ess}^{\I}(G)|$ when the true DAG is $G$. Since the true DAG is unknown, this entropy must be averaged over our prior distribution on $\mathcal{G}$, which is uniform. Hence, the entropy after observing an infinite number of samples per intervention in $\I$ equals $\frac{1}{|\mathcal{G}|} \sum_{G \in \mathcal{G}}  \log_2 |\text{Ess}^{\I}(G)|$. Minimizing this entropy over all possible interventions sets of size at most $K$ completes the proof. 
\subsection{Proof of \cref{thm:mutal_consistent}}

Let
\begin{equation*}
\I^{\infty} \coloneqq \{I \in \I^*: \sum_{b=1}^{\infty} |\tilde{I} \in \xi_{b}: \tilde{I} = I | = \infty \ \mu^* a.s. \},
\end{equation*}
where $\xi_{b}$ denotes the interventions selected at batch $b$ by $U_{\text{M.I}}^f$. Since $\I^*$ is finite, $\I^{\infty}$ is non-empty. When $|\I^{\infty}| > 1$, $\I^{\infty}$ is a conservative family of targets since $\I^*$ is a family of single-node interventions. Hence, we identify the $\I^{\infty}$-MEC of $G^*$ in the limit of an infinite number of batches and samples \citep{IMEC2012}. Assume $|\I^{\infty}| > 1$. If $f(G)$ is identifiable in $\text{Ess}^{\I^{\infty}}(G^*)$, then 
\begin{equation*}
\pr \left( f(G) \mid D_{B} \right) \xrightarrow{\text{$\mu^*$ a.s.}} \ind(f(G) = f(G^*)). 
\end{equation*}
Hence, it suffices to show that the interventions $U_{\text{M.I}}^f$ selects infinitely often identifies $f(G)$ in the limiting interventional essential graph $\text{Ess}^{\I^{\infty}}(G^*)$. Suppose towards a contradiction that $f(G)$ were not fully identifiable in $\text{Ess}^{\I^{\infty}}(G^*)$. 
By definition of almost sure convergence, there exists some $b^* < \infty$ such that any $\tilde{I} \in \I^* \setminus \I^{\infty}$ is never selected again after batch $b^*$ with probability one since $\I^*$ is finite. Maximizing $U^f_{\text{M.I.}}$ is equivalent to minimizing the conditional entropy,
\begin{equation}
H_{\xi}^b(f \mid Y_{\xi}) \coloneqq \E_{y \sim \pr(y \mid D_{b}, \xi)} \ H(f \mid D_{b}, Y=y).  
\end{equation}
If $b > b^*$, then  
\begin{equation} \label{eq:batch_opt_contra}
\argmin{\xi \in  \mathbb{Z}^{\I^*} \cap  C_b} \ H_{\xi}^b(f \mid Y_{\xi}) = \argmin{\xi \in  \mathbb{Z}^{\I^{\infty}} \cap  C_b} \ H_{\xi}^b(f \mid Y_{\xi})
\end{equation}
since any batch $b$ after $b^*$ never selects an intervention in $\tilde{I} \in \I^* \setminus \I^{\infty}$. Since $f$ is not identifiable in $\text{Ess}^{\I^{\infty}}(G^*)$, that implies
\begin{equation*}
\lim_{b \rightarrow \infty} H_{\xi_{\infty}}^b(f \mid Y_{\xi_{\infty}}) \rightarrow L > 0. 
\end{equation*}
Since $\I^*$ consists of all single-node interventions, $\I^*$ can identify $f(G)$ \citep{IMEC2012}. Hence, there must be some $\tilde{I} \in \I^* \setminus \I^{\infty}$ and $\epsilon > 0$ such that 
\begin{equation} \label{eq:limit_entropy}
\lim_{b \rightarrow \infty} H_{\xi_{\infty} \cup \tilde{I}_{\infty}}^b(f) < L - \epsilon, 
\end{equation}
where $\tilde{I}_{\infty}$ denotes selecting $\tilde{I}$ infinitely many times. But \cref{eq:limit_entropy} implies that there must exist some batch $b > b^*$ such that the conditional entropy of the design $\tilde{\xi} = \{ \tilde{I} \}$ is uniformly smaller than the conditional entropy of any $\xi \in \mathbb{Z}^{\I^{\infty}}$. But this is a contradiction because then $\tilde{I}$ would be selected again after some batch $b > b^*$ and \cref{eq:batch_opt_contra} would no longer hold. 
%

%

For $|\I^{\infty}| = 1$, we no longer have a conservative family of targets. However, a nearly identical argument works by noting that, in the limit, we learn the observational equivalence class of the $\I^{\infty}$ mutilated graph of $G^*$.

\subsection{Consistency Counterexample} \label{A:consis_counter}
\begin{figure}[t]
\centering
 \includegraphics[width=6cm]{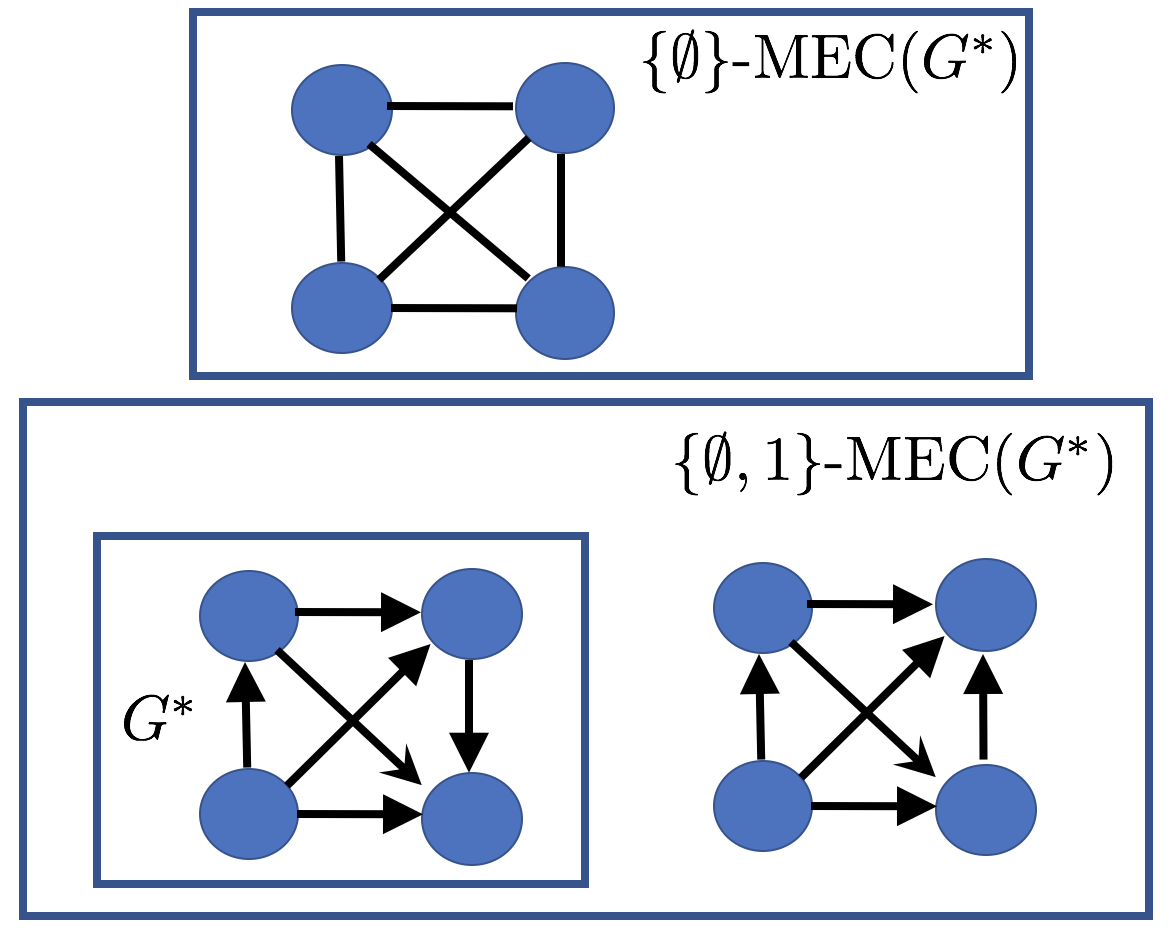}
 \caption{Each box represents the members of the interventional Markov equivalence classes. For $G^*$ given in the bottom left box, the observational Markov equivalence class has no edges oriented. The top box represents the essential graph of the observational Markov equivalence class. The interventional Markov equivalence class for an intervention at node one consists of two DAGs given in the bottom box.}
 \label{fig:consis_counter}
\end{figure}
Suppose we know the Markov equivalence class of $G^*$ and the goal is to fully recover $G^*$. Suppose $C_{b} = \{ \xi: \| \xi\|_0 = K \}$, where $\| \cdot \|_0$ counts the number of unique interventions in $\xi$. Since there is no constraint on the number of samples, only on the number of unique interventions, we may allocate an infinite number of samples per intervention within each batch. This constraint is equivalent to the one examined in \citet{bed_MEC}. The scores in both \citet{bayes_sachs} and \citet{bed_MEC} select interventions by maximizing the expected number of oriented edges in the interventional Markov equivalence classes. In particular, the utility function in \citet{bayes_sachs} is equivalent to maximizing,
\begin{equation} \label{ref:meek_score}
U(\I; D) = \sum_{G \in \mathcal{G}} A(\text{Ess}^{\I}(G)) \pr(G),
\end{equation}
%
where $A(\text{Ess}^{\I}(G))$ equals the additional number of edges oriented relative to the observational Markov equivalence class. Suppose $G^*$ equals  the graph in \cref{fig:consis_counter} and that $K=1$ unique interventions are allowed within each batch. Assume that $\I^* = \{ \{1\}, \cdots, \{4\} \}$ and that we start with a uniform prior over $\mathcal{G}$. Then, since all arrows are undirected in the observational Markov equivalence class, symmetry implies $U(\{j\}; \emptyset) = U(\{j\}; \emptyset)$ for all $i, j \in 1, \cdots, 4$. Without any loss of generality suppose intervention one is selected in batch one. We show that every subsequent batch will select intervention $\{1\}$. If only $\{1\}$ were selected, $U(\I; D)$ would not be consistent since the $\{\emptyset, \{1 \}\}$-MEC($G^*$) contains two graphs, as shown at the bottom of \cref{fig:consis_counter}. After batch one, the posterior is supported on these two graphs since an infinite number of samples are allocated to the intervention at node one. 

The utility function in \cref{ref:meek_score} scores interventions relative to the observational equivalence class, which causes the consistency issue. In particular, the posterior in batch two is only supported on the two DAGs given in the bottom box of \cref{fig:consis_counter}.  The score of $\{1\}$ equals $5$ in batch two while the scores of interventions $\{2\}, \{3\}, \{4\}$ equal $4, 3, 4$, respectively. Hence, in batch two, intervention $\{1\}$ will be selected again, but the posterior will remain the same since the $\{\emptyset, \{1 \}\}$ interventional Markov equivalence class of $G^*$ is already known.

An easy way to fix \cref{ref:meek_score} (for this given counterexample) would be to only select interventions not selected in previous batches. This modification would fix the issue with the counterexample, namely prevent intervention one from being selecting infinitely often. However, when one can only allocate a finite number of samples per batch, this modification would not lead to a consistent estimator. In particular, if a certain intervention is done in some batch, and that intervention must be conducted in order to identify $f$, then only placing finitely many samples to that intervention in that batch and never placing any more samples in subsequent batches will not lead to a consistent method.

\subsection{Proof of \cref{thm:greedy_gaurantee}}
%
%
\begin{ndefn} \citep{dr_submodular}
Let $E$ be a finite set. A function $f: \mathbb{Z}^E \rightarrow \R$ is \emph{diminishing returns submodular} (DR-submodular) if for $x \leq y$ 
\begin{equation}
f(x + \chi_e) - f(x) \geq f(y + \chi_e) - f(y), \ x, y \in \mathbb{Z}^E
\end{equation}
where $e \in E$ and $\chi_e$ is the ith unit vector.  
\end{ndefn}

\begin{nlem} \label{nlem:mut_submod}
$\tilde{U}_{\text{M.I.}}^f(\xi; D)$ is DR-submodular.
\end{nlem}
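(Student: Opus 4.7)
The plan is to split $\tilde{U}_{\text{M.I.}}^f$ into a DR-submodular piece and a modular piece and then use the standard fact that submodular minus modular is submodular. Throughout I specialize to $f(G)=G$, which is the case the enclosing theorem requires (and the only case for which, as the remark after \cref{thm:greedy_gaurantee} notes, DR-submodularity is expected to go through cleanly). Writing $Y_\xi$ for the random observations generated by the multiset $\xi$ under the plug-in posterior $\tilde{\pr}(G\mid D)$ from \cref{eq:importance_post_dist}, I would first use the identity $I(G;Y_\xi\mid D) = H(Y_\xi\mid D) - H(Y_\xi\mid G,D)$, so that
\begin{equation*}
\tilde{U}_{\text{M.I.}}^{f}(\xi;D) \;=\; H(Y_\xi\mid D)\;-\;H(Y_\xi\mid G,D).
\end{equation*}

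Next I would handle the modular term. Under the plug-in posterior, conditional on $G$ (which determines $\hat{\theta}^{G}_{\text{MLE}}$ from $D$) the samples constituting $Y_\xi$ are mutually independent, with the samples from intervention $I$ drawn i.i.d.\ from $\pr(y\mid G,\hat{\theta}^{G}_{\text{MLE}},I)$. Thus
\begin{equation*}
H(Y_\xi\mid G,D) \;=\; \sum_{I\in\I^{*}} \xi_{I}\, H\bigl(y_{I}\,\big|\, G,D\bigr),
\end{equation*}
which is linear (hence modular) in $\xi$.

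For the DR-submodularity of $\xi\mapsto H(Y_\xi\mid D)$ I would use a coupling argument. Because the samples within each intervention type are exchangeable, for any $\xi\le\eta$ componentwise I can realize $Y_\xi$ as a subvector of $Y_\eta$ (take the first $\xi_I$ samples of each type). Then the marginal increment of adding a fresh sample $y_I^{\text{new}}$ from intervention $I$ to either multiset is a conditional entropy:
\begin{equation*}
H(Y_{\xi+\chi_I}\mid D) - H(Y_\xi\mid D) \;=\; H\bigl(y_I^{\text{new}}\,\big|\, Y_\xi,D\bigr),
\end{equation*}
and analogously for $\eta$. Since conditioning reduces entropy, $H(y_I^{\text{new}}\mid Y_\eta,D)\le H(y_I^{\text{new}}\mid Y_\xi,D)$, which is exactly the DR-submodularity inequality. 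Combining with the modularity of $H(Y_\xi\mid G,D)$ (whose marginals do not depend on $\xi$ at all) gives that the difference is DR-submodular.

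The main obstacle I expect is the bookkeeping needed to make the coupling between $Y_\xi$ and $Y_\eta$ precise, i.e., to argue that viewing $Y_\xi$ as a sub-collection of $Y_\eta$ is legitimate even though $\xi$ and $\eta$ are multisets rather than sets. This rests on exchangeability of samples of the same intervention type under $\pr(y\mid G,\hat{\theta}^{G}_{\text{MLE}},I)$, which is immediate from the i.i.d.\ structure. A secondary subtlety is that $H$ may be a differential entropy when $y$ is continuous, but the argument requires only subadditivity and "conditioning reduces entropy," both of which hold for differential entropy.
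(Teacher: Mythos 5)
Your proof is correct and is essentially the paper's own argument in slightly different packaging: the paper establishes the DR increment inequality directly as monotonicity of the conditional mutual information, $I((G, \mle); Y_C \mid Y_A) \geq I((G, \mle); Y_C \mid Y_B)$ for $A \subseteq B$, using exactly your two ingredients --- ``conditioning reduces entropy'' and the conditional independence of samples given $(G, \mle)$, the latter being precisely what makes your term $H(Y_\xi \mid G, D)$ modular. Your symmetric decomposition $I(G; Y_\xi \mid D) = H(Y_\xi \mid D) - H(Y_\xi \mid G, D)$ with ``submodular minus modular'' is an equivalent bookkeeping of the same two facts (and your coupling/exchangeability care is sound), so there is no gap.
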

\begin{proof}
$f(G) = G$ so we omit $f$ in $\tilde{U}_{\text{M.I.}}^f$ to simplify notation. Since the sum of submodular functions is submodular, it suffices to show 
\begin{equation}
\begin{split}
\E_{y \mid G, \mle, \xi} \ \tilde{U}_{\text{M.I.}}(y, \xi; D) & = H(G) - H(G \mid Y_{\xi}) \\
			  & = I((G, \mle), Y_{\xi})
\end{split}
\end{equation}
is DR-submodular, where $I$ is the mutual information. Consider an $A \subseteq B \in \mathbb{Z}^{\I^*}$. Take any $C \in \I^*$. Since entropy decreases with more conditioning,  
\begin{equation} \label{eq:entropy_cond}
    \begin{split}
        H(Y_C \mid Y_A) - H(Y_C \mid (G, \mle)) \geq \\ H(Y_C \mid Y_B) - H(Y_C \mid (G, \mle)).
    \end{split}
\end{equation}
By conditional independence, 
\begin{equation}
    \begin{split}
        H(Y_C \mid (G, \mle)) & = H(Y_C \mid (G, \mle), Y_A) \\
                              &=  H(Y_C \mid (G, \mle), Y_B).
    \end{split}
\end{equation}
Hence, \cref{eq:entropy_cond} may be rewritten as,
\begin{equation} \label{eq:entropy_cond2}
    \begin{split}
        I((G, \mle), Y_C \mid Y_A) = \\ H(Y_C \mid Y_A) - H(Y_C \mid (G, \mle), Y_A) \geq \\ H(Y_C \mid Y_B) - H(Y_C \mid (G, \mle), Y_B) = \\
        I((G, \mle), Y_C \mid Y_B).
    \end{split}
\end{equation}
\cref{eq:entropy_cond2} implies 
\begin{equation}
\begin{split}
& I((G, \mle), Y_{A} + Y_C) - I((G, \mle), Y_{A}) \\
	& \geq I((G, \mle), Y_{B} + Y_C) - I((G, \mle), Y_{B})
\end{split}
\end{equation}
as desired.
\end{proof}

The proof of \cref{thm:greedy_gaurantee} then follows directly from \cref{nlem:mut_submod} and \citet[Theorem 2.4]{submod_budget_alloc}.

\subsection{Proof of \cref{prop:total_runtime}} \label{A:runtime_proof}
For each graph $G \in \mathcal{G}_T$, compute the associated edge weights $\mle$. Computing each $\mle$ takes $O(p\kappa^3)$ time using the formula given in \citet[pg. 17]{IMEC2012}. Since there are $T$ DAGs, the total time to compute the MLE estimates of the edge weights of each DAG is $O(Tp\kappa^3)$. 
Sampling from a multivariate Gaussian with bounded indegree with known adjacency matrix takes $O(p\kappa)$ time. $\hat{U}^f_{\text{M.I.}}$ requires a total of $|\I^*|MN_bT^2$ samples. Hence, the total computation time of sampling all the $y_{mt}$ in \cref{eq:final_approx} is $O(|\I^*|MN_b \kappa pT^2)$. Evaluating $\hat{U}^f_{\text{M.I.}}$ takes $O(M T^2)$ time using these samples, which is of lower computational complexity than computing $\hat{U}^f_{\text{M.I.}}$. Hence, the total runtime is $O(p\kappa^3 + |\I^*|MN_b \kappa pT^2)$. 

\subsection{Constraint on the Number of Unique Interventions} \label{A:unique_int_constraint}

If we are only allowed to allocate at most $K$ unique interventions per batch, we modify \cref{alg:abc_algo} by allocating $\frac{N_b}{K}$ samples per intervention in \cref{alg:greedy_algo}. 
Once an intervention is selected, that intervention is removed from $I^*$ and another one is greedily selected from the remaining set. With this strategy, \cref{alg:greedy_algo} will terminate after $K$ iterations. Hence, there will be at most $K$ unique interventions as desired. 

\subsection{DREAM4 Supplementary Figures}
We applied our targeted experimental design strategy towards learning the \emph{downstream pathways} of select genes from a 10-node network from the DREAM4 challenge. We observed a modest improvement over the random strategy for some central genes in the network (\cref{fig:dream4-supp}, top). However, the results are subject to high variations (\cref{fig:dream4-supp}, bottom), which we surmise to be due to the small size of the observational dataset. Nevertheless, these preliminary results illustrate the promise of applying targeted experimental design to real, large-scale biological datasets.
\begin{figure*}
\centering
\includegraphics[width=.7\textwidth]{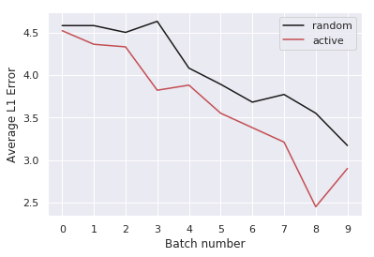}
\includegraphics[width=.7\textwidth]{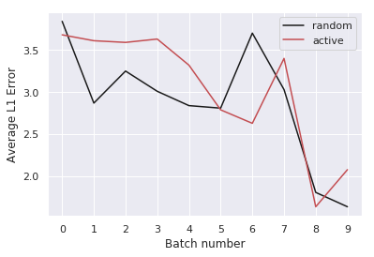}
\caption{Performance of intervention strategies on predicting the descendants of genes $6$ (top) and $8$ (bottom).}
\label{fig:dream4-supp}
\end{figure*}
\end{document}